\documentclass[preprint,12pt,authoryear]{elsarticle}

\usepackage{amsmath,amssymb,amsthm,abraces}
\usepackage{graphicx}
\usepackage{enumitem}
\usepackage{hyperref}

\newtheorem{theorem}{Theorem}

\newtheorem{corollary}[theorem]{Corollary}
\newtheorem{proposition}[theorem]{Proposition}

\newtheorem{example}{Example}

\newcommand{\Prob}{\mathbb P}
\newcommand{\Bin}{\operatorname{Bin}}
\newcommand{\ULC}{\operatorname{ULC}}
\newcommand{\myqed}{\hfill $\triangle$}

\begin{document}

\begin{frontmatter}

\title{Maximum Entropy of Sums of Independent Ternary Random Variables}

\author[uns]{Mladen Kova\v{c}evi\'c\corref{cor1}}
\ead{kmladen@uns.ac.rs}
\cortext[cor1]{Corresponding author.}

\affiliation[uns]{organization={Faculty of Technical Sciences, University of Novi Sad},
            country={Serbia}}

\begin{abstract}
The classical problem of maximizing the Shannon entropy of a sum of
independent random variables supported on a finite alphabet is considered,
and settled in the ternary case.
Namely, the following theorem is established:
if \(X_1,\ldots,X_n\) are independent random variables taking values in
\(\{0,1,2\}\), then the entropy of \(S_n=X_1+\cdots+X_n\) is maximized
when \(X_1,\ldots,X_{n-1}\) are uniform on \(\{0,2\}\) and the probability
mass function of \(X_n\) is given by \(\Prob(X_n=0) = \Prob(X_n=2) = w/2\),
\(\Prob(X_n=1) = 1-w\), where \(w = \big(1 + 2^{-H(B_n)+H(B_{n-1})}\big)^{-1}\)
and \(B_m\sim \Bin(m,1/2)\).
The statement can be seen as an extension to ternary alphabets of the
Shepp--Olkin--Mateev theorem.
The proof uses the Hermite--Biehler theorem, Newton's inequalities, and Yu's
maximum-entropy theorem for ultra-log-concave distributions.
\end{abstract}

\begin{keyword}
Maximum entropy \sep sums of random variables \sep binomial distribution \sep
Shepp--Olkin theorem \sep ultra-log-concavity
\MSC[2020] 94A17 \sep 60C05 \sep 60G50
\end{keyword}

\end{frontmatter}

\section{Introduction}
\label{sec:intro}

Entropy maximization problems occupy an important place in probability,
statistics, information theory, and statistical mechanics \citep{jaynes,cover+thomas}.
Many of the most familiar probability laws -- uniform, Gaussian, exponential,
geometric, binomial, Poisson, etc.\ -- can be characterized as the ``most random''
distributions subject to natural constraints.
Such results are useful not only as variational characterizations of classical
distributions, but also because they often reveal hidden convexity, log-concavity,
or stability phenomena behind probabilistic inequalities.

A particularly elegant problem in this context concerns the entropy of sums
of independent random variables.
Let \(X_1,\ldots,X_n\) be independent random variables taking values in
\(\{0,1,\ldots,r\}\), and set
\begin{equation}
\label{eq:sum-definition-intro}
S_n = X_1 + \cdots + X_n .
\end{equation}
For \(r=1\), this is the classical Bernoulli-sum problem.
The Shepp--Olkin--Mateev theorem \citep{shepp+olkin,mateev} asserts that \(H(S_n)\)
is maximized when all \(X_i\) are \(\operatorname{Bernoulli}(1/2)\).
Equivalently, among all sums of independent Bernoulli random variables, the
entropy-maximizing one is the symmetric binomial distribution.
This result has inspired a substantial line of work on related problems
\citep{harremoes,hillion+johnson,hillion+johnson2,hillion+johnson3,johnson,johnson3,yu,yu2}.

For non-binary alphabets, however, very little is known.
The naive guess that each \(X_i\) should be uniform on \(\{0,1,\ldots,r\}\) is
false.
In fact, the entropy-maximizing configuration appears to be highly nonuniform.
The present author conjectured in \citep{kovacevic} that the maximum is attained
by placing \(n-1\) of the variables on the two endpoints \(0\) and \(r\), each
uniformly, while the remaining variable is a mixture of the uniform distribution
on the endpoints \(\{0,r\}\) and the uniform distribution on the interior points
\(\{1,\ldots,r-1\}\).
More precisely, the conjectured maximum is
\begin{equation}
\label{eq:conjectured-maximum}
w_0H(B_n)+(1-w_0)\bigl(H(B_{n-1})+\log_2(r-1)\bigr)+h(w_0),
\end{equation}
where \(B_m\sim\operatorname{Bin}(m,1/2)\) and
\begin{equation}
\label{eq:conjectured-weight}
w_0 = \frac{1}{1 + (r-1) 2^{-H(B_n)+H(B_{n-1})}} .
\end{equation}
This formula is consistent with the binary Shepp--Olkin--Mateev theorem, and
was proved in \citep{kovacevic} in several other special cases, including \(n=2\)
for arbitrary \(r\).
In this note we prove the conjecture for the first non-binary alphabet, that is,
for \(r=2\) and arbitrary \(n\).

The conjecture was inspired in part by the continuous version of the problem
\citep{ordentlich,yu2} and suggests an interesting qualitative phenomenon.
In the entropy-maximizing configuration, most variables do not use the whole
alphabet -- they use only the two extreme symbols.
The remaining variable is responsible for distributing mass among the residue
classes modulo \(r\).
Thus the conjecture separates the problem into two effects: binomial spreading
along each residue class, and entropy of the residue class itself.
Unfortunately, the proof method based on this observation does not seem to extend
to \(r \geqslant 3\).

The note is organized as follows.
Section~\ref{sec:prelim} recalls the needed facts about ultra-log-concavity,
the Hermite--Biehler theorem, Newton's inequalities, and Yu's maximum-entropy
theorem.
Section~\ref{sec:parity} proves the ultra-log-concavity of the parity-conditioned
laws, i.e., of the distributions in each of the two residue classes.
In Section~\ref{sec:main} we state and prove our main result -- the maximum
entropy theorem for sums of independent ternary random variables.
Section~\ref{sec:example} presents a counterexample showing that our residue-class
ultra-log-concavity approach fails for larger alphabets.

\section{Preliminaries}
\label{sec:prelim}

The Shannon entropy of a discrete random variable \(X\) taking values in
\(\{0, \ldots, r\}\) is defined as
\begin{equation}
\label{eq:entropy-definition}
H(X) = - \sum_{j=0}^r \Prob(X=j) \log_2 \Prob(X=j) .
\end{equation}
In the binary case we also write \begin{equation}
\label{eq:binary-entropy}
h(p) = -p \log_2 p - (1-p) \log_2(1-p).
\end{equation}

A nonnegative sequence \((u_0,\ldots,u_m)\) is said to be log-concave if the
following inequalities hold
\begin{equation}
\label{eq:log-concavity}
u_k^2 \geqslant u_{k-1}u_{k+1},  \qquad  k =  1, \ldots, m-1 .
\end{equation}
It is said to be ultra-log-concave of order \(m\), abbreviated \(\ULC(m)\), if
the sequence
\begin{equation}
\label{eq:ulc-normalized}
\bigg(\frac{u_k}{\binom{m}{k}}\bigg)_{\! k=0}^{\! m}
\end{equation}
is log-concave, or, equivalently, if
\begin{equation}
\label{eq:ulc-inequality}
k(m-k)u_k^2  \geqslant  (k+1)(m-k+1)u_{k-1}u_{k+1},  \qquad  k = 1, \ldots, m-1 .
\end{equation}

We shall need in the sequel the following maximum-entropy theorem of Yu \citep{yu}.

\begin{theorem}[Yu]
\label{thm:Yu}
Let \(Y\) be a random variable taking values in \(\{0,\ldots,m\}\).
If the probability mass function of \(Y\) is \(\ULC(m)\), then
\begin{equation}
\label{eq:yu-bound}
H(Y) \leqslant H(B_m) ,
\end{equation}
where \(B_m \sim \Bin(m,1/2)\).
\end{theorem}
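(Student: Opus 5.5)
The plan is to compare \(Y\) with a binomial law of the \emph{same mean}, and only afterwards pass to \(q=1/2\). Let \(p_k=\Prob(Y=k)\), \(\mu=\mathbb E[Y]\), and \(q=\mu/m\in[0,1]\). Let \(b_k=\binom{m}{k}q^k(1-q)^{m-k}\) be the pmf of \(\Bin(m,q)\). The proof then has three steps:
\begin{equation*}
H(Y)\leqslant H(\Bin(m,q))\leqslant H(B_m).
\end{equation*}
The second inequality is the easy one. By Shepp--Olkin--Mateev with all parameters equal to \(q\), or directly, \(q\mapsto H(\Bin(m,q))\) is concave and symmetric about \(1/2\), so it is maximized at \(q=1/2\). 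The boundary cases \(q\in\{0,1\}\) are trivial, since then \(Y\) is degenerate.

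For the first inequality I would use Gibbs' inequality. It gives
\begin{equation*}
H(Y)\leqslant -\sum_k p_k\log_2 b_k = -\mathbb E\Bigl[\log_2\tbinom{m}{Y}\Bigr] - \mu\log_2 q-(m-\mu)\log_2(1-q).
\end{equation*}
The same expression with \(Y\) replaced by \(B\sim\Bin(m,q)\) equals \(H(\Bin(m,q))\) exactly. Since \(Y\) and \(B\) have the same mean \(\mu\), the linear terms coincide. It therefore suffices to show
\begin{equation*}
\mathbb E\bigl[f(Y)\bigr]\geqslant \mathbb E\bigl[f(B)\bigr],\qquad f(k)=\log_2\tbinom{m}{k}.
\end{equation*}
The function \(f\) is concave on \(\{0,\dots,m\}\), because \(f(k+1)-f(k)=\log_2\frac{m-k}{k+1}\) is decreasing in \(k\). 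Hence it is enough to prove that \(Y\) is smaller than \(B\) in the convex order, \(Y\leqslant_{\mathrm{cx}} B\).

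The main step is this convex-order comparison, and it is where ultra-log-concavity enters. By \eqref{eq:ulc-normalized}, \(p_k/\binom mk\) is log-concave. Multiplying by the log-linear factor \(\bigl(q/(1-q)\bigr)^{-k}(1-q)^{-m}\) preserves log-concavity, so \(k\mapsto p_k/b_k\) is log-concave on its support. This support is an interval, since a log-concave sequence has no internal zeros, which I would check from \eqref{eq:ulc-inequality}.

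A log-concave sequence \(c_k=p_k/b_k\) minus the constant \(1\) changes sign at most twice, with sign pattern \((-,+,-)\) when two changes occur. Zeros of \(p_k\) at the ends fall into the ``\(-\)'' blocks. Now \(\sum_k(p_k-b_k)=0\) and \(\sum_k k(p_k-b_k)=0\). These two conditions rule out zero or one sign change unless \(p=b\), so \(p-b\) has exactly the pattern \((-,+,-)\).

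The Karlin--Novikoff cut criterion then gives \(\mathbb E\,\phi(Y)\leqslant\mathbb E\,\phi(B)\) for every convex \(\phi\). A direct way to see this is to show that \(\Psi(t)=\sum_{k}(k-t)_+(b_k-p_k)\geqslant 0\). The function \(\Psi\) vanishes outside \([0,m]\) by the two moment conditions, and it is piecewise linear with slope changes whose signs follow the pattern \((+,-,+)\), which forces \(\Psi\geqslant 0\). Applying this to \(\phi=-f\) finishes the proof.

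I expect the sign-change and convex-order argument to need the most care. The delicate points are the endpoints of the support and the discrete form of the cut criterion. That is where I would spend the effort, first checking the argument on small \(m\).
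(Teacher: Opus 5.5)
The paper gives no proof of this statement; it is quoted from Yu's paper. Your route is essentially Yu's own argument: Gibbs' inequality against the mean-matched \(\Bin(m,q)\), concavity of \(k\mapsto\log_2\binom{m}{k}\), the convex order \(Y\leqslant_{\mathrm{cx}}\Bin(m,q)\) obtained from log-concavity of \(p_k/b_k\) and the \((-,+,-)\) sign pattern, and finally \(H(\Bin(m,q))\leqslant H(B_m)\).

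The sign-change count, the use of the two moment identities, and the analysis of \(\Psi\) are correct. The jump of \(\Psi'\) at \(j\) is \(b_j-p_j\), so \(\Psi'\) starts at \(0\), is nondecreasing, then nonincreasing, then nondecreasing back to \(0\). Hence \(\Psi\) rises and then falls, and since it vanishes at both ends, \(\Psi\geqslant 0\). For the last step, Shepp--Olkin--Mateev with equal parameters gives \(H(\Bin(m,q))\leqslant H(B_m)\) directly, so concavity in \(q\) is not needed.

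The one step that fails as planned is checking ``from \eqref{eq:ulc-inequality}'' that the support of \(p\) is an interval. Those inequalities hold vacuously whenever positive entries are separated by runs of at least two zeros. Take \(m=33\) and let \(Y\) be uniform on \(\{0,3,6,\ldots,33\}\). For each \(k\), at most one of \(k-1,k+1\) is a multiple of \(3\), so every right-hand side in \eqref{eq:ulc-inequality} is \(0\). Thus \(Y\) is \(\ULC(33)\) in the paper's literal sense. Yet \(H(Y)=\log_2 12\approx 3.585\), while \(H(B_{33})\leqslant\tfrac12\log_2\bigl(2\pi e(\tfrac{33}{4}+\tfrac{1}{12})\bigr)\approx 3.577\). (To see the bound, add an independent uniform on \([0,1)\) and compare with the Gaussian.)

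So with the definition as written the statement is false, and contiguity cannot be derived from the inequalities. It must be built into the definition of log-concavity (support an interval of integers), as in Yu's paper. With that convention your proof is complete.

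In the paper's application nothing is lost, because \(E\) and \(O\) are real-rooted with nonnegative coefficients, and real-rootedness survives the limit \(\varepsilon\downarrow 0\). Note, however, that the paper's limiting argument only invokes closedness of the inequalities, which does not by itself give contiguity.
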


We shall also need a particular consequence of the Hermite--Biehler theorem
\citep[Theorem 6.3.4]{rahman+schmeisser} stated below.

\begin{theorem}[Hermite--Biehler]
\label{thm:HBorig}
Let \(R\) and \(I\) be non-constant polynomials with real coefficients.
Then \(R\) and \(I\) have strictly interlacing zeros if and only if the polynomial
\(F(z) = R(z) + i I(z)\) has all its zeros either in the open upper half-plane or
in the open lower half-plane.
\end{theorem}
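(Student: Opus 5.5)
The plan is to use the rational function \(\varphi = I/R\) for the implication ``interlacing \(\Rightarrow\) half-plane'', and the monotonicity of \(\arg F(x)\) along the real axis for the converse. Throughout, ``strictly interlacing'' means that all zeros of \(R\) and \(I\) are real and simple, no zero is shared, and between two consecutive zeros of one polynomial there is exactly one zero of the other. In particular \(|\deg R - \deg I| \leqslant 1\).

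\emph{Interlacing \(\Rightarrow\) zeros in one half-plane.}
Suppose first that \(\deg I \leqslant \deg R\). The other case reduces to this one by replacing \(F\) with \(-iF = I - iR\), which swaps the roles of \(R\) and \(I\) and does not move the zeros. Let \(x_1<\dots<x_m\) be the simple real zeros of \(R\), and write
\[
\frac{I(z)}{R(z)} = az + b + \sum_{k=1}^m \frac{c_k}{z-x_k},
\qquad c_k = \frac{I(x_k)}{R'(x_k)} .
\]
Here \(a,b\) are real, and \(a=0\) unless \(\deg I=\deg R+1\).

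By interlacing, \(I\) changes sign exactly once between consecutive \(x_k\). Since \(R'\) also alternates in sign at consecutive \(x_k\), all \(c_k\) are nonzero and share one sign \(\sigma\). When \(a\neq 0\), the extra zero of \(I\) lies outside \([x_1,x_m]\), and comparing the behaviour of \(I/R\) as \(x \to \pm\infty\) shows that \(a\) has sign \(-\sigma\). Then
\[
\operatorname{Im}\frac{I(z)}{R(z)} = \operatorname{Im} z\,\Big(a - \sum_k \frac{c_k}{|z-x_k|^2}\Big),
\]
and the bracket has constant sign \(-\sigma\), strictly nonzero because \(m \geqslant 1\) or \(a \neq 0\). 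Hence \(\operatorname{Im}(I/R)\) has sign \(-\sigma\operatorname{sign}(\operatorname{Im} z)\) off the real axis.

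Now locate the zeros of \(F\).
\begin{itemize}
\item A real zero of \(F\) would be a common zero of \(R\) and \(I\), which strict interlacing excludes.
\item A nonreal zero with \(R(z)=0\) is impossible, since \(R\) is real-rooted.
\item Otherwise \(F(z)=0\) means \(I(z)/R(z) = i\). This needs \(\operatorname{Im}(I/R)>0\), which happens in exactly one open half-plane, determined by \(\sigma\).
\end{itemize}

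\emph{Zeros in one half-plane \(\Rightarrow\) interlacing.}
By conjugating, we may assume all zeros \(z_1,\dots,z_n\) of \(F\) have \(\operatorname{Im} z_j>0\), so \(F(x)=c\prod_j(x-z_j)\) never vanishes on \(\mathbb R\). A continuous branch \(\theta(x)=\arg F(x)\) satisfies
\[
\theta'(x)=\sum_j \frac{\operatorname{Im} z_j}{|x-z_j|^2}>0 ,
\]
so \(\theta\) is strictly increasing, with total increase \(n\pi\) over \(\mathbb R\). Since \(R\) and \(I\) are real on \(\mathbb R\), \(R(x)=\operatorname{Re}F(x)\) and \(I(x)=\operatorname{Im}F(x)\). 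Therefore:
\begin{itemize}
\item \(R(x)=0\) exactly when \(\theta(x)\equiv \pi/2 \pmod \pi\);
\item \(I(x)=0\) exactly when \(\theta(x)\equiv 0 \pmod \pi\).
\end{itemize}
Strict monotonicity makes \(\theta\) cross these levels alternately, and each crossing is transversal, so the corresponding zeros are simple. This produces real zeros of \(R\) and \(I\) that alternate and are never shared.

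\emph{Expected main obstacle.}
The step I expect to be hardest is the degree bookkeeping in this converse. I have to show that the crossings of \(\theta\) account for \emph{all} zeros of \(R\) and of \(I\), so that none are nonreal. The plan is to count the crossings from the limits \(\theta(\pm\infty)\). These limits are fixed by \(\arg c\), which equals \(\arg F(x)\) as \(x\to+\infty\), together with the total increase \(n\pi\). The count should give \(\deg R\) zeros of \(R\) and \(\deg I\) zeros of \(I\); some of these degrees can drop below \(n\) depending on \(\arg c\). Handling this carefully needs a short case analysis on whether \(\arg c\) is a multiple of \(\pi/2\). The nonconstancy hypothesis is used here to guarantee that each polynomial has at least one zero, so that interlacing is meaningful.
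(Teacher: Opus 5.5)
Your proof is correct, but the paper gives no proof of this statement to compare it with. It quotes the theorem from \citep[Theorem 6.3.4]{rahman+schmeisser} and uses only the direction ``zeros in one half-plane \(\Rightarrow\) strictly interlacing'', in Corollary~\ref{thm:HB}, to get real-rootedness of \(E(-x^2)\) and \(xO(-x^2)\). What your argument adds is self-containedness. It also shows that the part the paper relies on reduces to the strict monotonicity of \(\arg F(x)\) along \(\mathbb{R}\).

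The degree bookkeeping you flagged as the main obstacle closes in a few lines, and is simpler than the case analysis you anticipated:
\begin{enumerate}[label=(\roman*)]
\item Set \(\gamma=\arg c\). Then \(\theta\) maps \(\mathbb{R}\) bijectively onto the open interval \((\gamma-n\pi,\gamma)\).
\item The \(x^n\)-coefficients of \(R\) and \(I\) are \(|c|\cos\gamma\) and \(|c|\sin\gamma\).
\item An open interval of length \(n\pi\) contains exactly \(n\) points of \(\frac{\pi}{2}+\pi\mathbb{Z}\), unless \(\gamma\) belongs to that progression, in which case it contains \(n-1\).
\item In the exceptional case \(\cos\gamma=0\), so \(\deg R\leqslant n-1\). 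The \(n-1\) distinct real zeros force \(\deg R\geqslant n-1\), since \(R\not\equiv 0\). So \(\deg R=n-1\) and every zero of \(R\) is real and simple, and you never need the \(x^{n-1}\)-coefficient.
\item The same argument with \(\pi\mathbb{Z}\) handles \(I\).
\item Non-constancy is exactly what excludes \(n=1\) with \(\gamma\in\frac{\pi}{2}\mathbb{Z}\), where \(R\) or \(I\) would be constant.
\end{enumerate}

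A minor point in the first half: after your reduction to \(\deg I\leqslant\deg R\), the linear term \(a\) in the partial-fraction expansion is always \(0\). The discussion of the case \(a\neq 0\) is therefore vacuous; it is harmless but can be dropped.
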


\begin{corollary}
\label{thm:HB}
Let \(P\) be a real polynomial all of whose zeros lie in the open left half-plane.
Write
\begin{equation}
\label{eq:hb-decomposition}
P(z) = E(z^2) + z O(z^2) .
\end{equation}
Then \(E\) and \(O\) have only real nonpositive zeros.
\end{corollary}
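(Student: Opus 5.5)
The plan is to reduce the corollary to Theorem~\ref{thm:HBorig} by evaluating \(P\) along the imaginary axis.

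\textbf{Step 1: reduce to a polynomial in \(z\).} Since \(P\) is real, \(E\) and \(O\) are real polynomials. Substituting \(z = i w\) with \(w\) real gives \(z^2 = -w^2\), so
\[
P(iw) = E(-w^2) + i\,w\,O(-w^2).
\]
Set \(R(w) = E(-w^2)\) and \(I(w) = w\,O(-w^2)\), both real polynomials in \(w\). Then the polynomial \(F(w) = R(w) + iI(w) = P(iw)\) has zeros \(w = -i z_k\), where \(z_k\) are the zeros of \(P\). Since \(\operatorname{Re} z_k < 0\), we get \(\operatorname{Im}(-i z_k) = -\operatorname{Re} z_k > 0\). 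Hence all zeros of \(F\) lie in the open upper half-plane.

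\textbf{Step 2: apply Hermite--Biehler.} To invoke Theorem~\ref{thm:HBorig}, \(R\) and \(I\) must be non-constant, so I would first dispose of degenerate cases.
\begin{itemize}
\item If \(\deg P = 0\), there is nothing to prove.
\item If \(\deg P = 1\), then \(P(z) = a(z + c)\) with \(c > 0\). So \(E\) and \(O\) are constants, and the claim holds vacuously (constants have no zeros).
\item If \(\deg P \geqslant 2\), then both \(E\) and \(O\) are nonzero. Indeed, a polynomial with all zeros in the open left half-plane has all coefficients of one strict sign, since it is a product of factors \(z + c\) with \(c > 0\) and \(z^2 + bz + c\) with \(b, c > 0\).
\end{itemize}
When \(\deg P \geqslant 2\), \(R\) is non-constant if \(\deg E \geqslant 1\), and \(I\) is non-constant because \(\deg I \geqslant 1\) always. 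A residual case may remain, e.g. \(\deg P = 2\) with \(\deg E = 1\) but \(\deg O = 0\). There \(R\) and \(I\) are still both non-constant, and a constant \(O\) again makes the claim about \(O\) vacuous. The only genuinely nonvacuous content is therefore for non-constant \(E\) or \(O\). If one of them is constant, I would argue directly: for instance, with \(E\) constant and \(\deg P \geqslant 2\), one still has \(\deg E \geqslant 1\) because \(\deg E = \lfloor \deg P/2 \rfloor\). So a constant \(E\) or \(O\) only occurs in the harmless low-degree cases above.

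With \(R\) and \(I\) non-constant, Theorem~\ref{thm:HBorig} says that \(R\) and \(I\) have strictly interlacing zeros. In particular, all zeros of \(R\) and of \(I\) are real.

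\textbf{Step 3: transfer back to \(E\) and \(O\).} Let \(\zeta\) be a zero of \(E\). Any \(w\) with \(w^2 = -\zeta\) is a zero of \(R\), hence real, so \(-\zeta = w^2 \geqslant 0\), i.e.\ \(\zeta \leqslant 0\). Likewise every zero \(\zeta\) of \(O\) produces zeros \(w = \pm\sqrt{-\zeta}\) of \(I\), which must be real, so \(\zeta \leqslant 0\) and \(\zeta\) is real. In fact \(\zeta \neq 0\) since \(P\) has positive constant-type coefficients, but only nonpositivity is required.

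\textbf{Expected obstacle.} The main obstacle is just the bookkeeping of degenerate low-degree cases, so that the non-constancy hypothesis of Theorem~\ref{thm:HBorig} is met. The core argument is the substitution \(z = iw\).
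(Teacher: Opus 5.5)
Your proof is correct and matches the paper's argument: set $F(x)=P(ix)=E(-x^2)+ixO(-x^2)$, observe that its zeros lie in the open upper half-plane, apply Theorem~\ref{thm:HBorig} to conclude that $E(-x^2)$ and $xO(-x^2)$ have only real zeros, and pull these back to $E$ and $O$. Your extra degree bookkeeping fills in the non-constancy check that the paper skips, and it is sound despite the muddled ``residual case'' paragraph: a polynomial with all zeros in the open left half-plane has coefficients of one strict sign, so for $\deg P\geqslant 2$ both $E(-x^2)$ and $xO(-x^2)$ are non-constant. The one slip is $\deg P=0$, which is not quite ``nothing to prove'' because there $O\equiv 0$; this is a degenerate defect of the statement itself and does not matter for the application, where $\deg P=2n$.
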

\begin{proof}
Set \(F(x) = P(ix)\).
Since \(P\) has all zeros in the open left half-plane, \(F\) has all zeros in the
open upper half-plane.
Moreover,
\begin{equation}
\label{eq:hb-proof-transform}
F(x) = E(-x^2) + i x O(-x^2) .
\end{equation}
By Theorem \ref{thm:HBorig}, the real and imaginary parts of \(F\), namely
\(E(-x^2)\) and \(xO(-x^2)\), have only real zeros.
It follows immediately that all zeros of \(E\) and \(O\) are real and nonpositive.
\end{proof}

Finally, we recall Newton's inequalities in the following form \citep[Theorem 2]{stanley}.

\begin{theorem}[Newton]
\label{thm:Newton}
If a polynomial \(F(t)=\sum_{k=0}^m f_k t^k\) with real coefficients has only
real zeros, then
\begin{equation}
\label{eq:newton-inequalities}
k(m-k)f_k^2  \geqslant  (k+1)(m-k+1)f_{k-1}f_{k+1},  \qquad  k = 1, \ldots, m-1 .
\end{equation}
In particular, if the coefficients \(f_k\) are nonnegative, the sequence
\((f_0, \ldots, f_m)\) is \(\ULC(m)\).
\end{theorem}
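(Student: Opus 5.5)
We will prove the inequalities by the classical reduction to a quadratic. Applying to \(F\) a sequence of operations that each preserve real-rootedness will isolate the three coefficients \(f_{k-1},f_k,f_{k+1}\) in a polynomial of degree at most two. The discriminant condition for that quadratic is then exactly the desired inequality.

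Throughout, ``real-rooted'' means nonzero with only real zeros; the case \(F\equiv 0\) is trivial. Two operations preserve real-rootedness.

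\emph{Differentiation.} If \(G\) has real zeros \(x_1\leqslant\cdots\leqslant x_d\) counted with multiplicity, then \(G'\) has a zero of multiplicity one less at each multiple zero, and by Rolle's theorem a zero strictly between consecutive distinct zeros. This accounts for all \(d-1\) zeros of \(G'\), so \(G'\) is real-rooted or a nonzero constant.

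\emph{Reversal.} For a polynomial \(G\) of degree at most \(N\), set \(G^*(t)=t^N G(1/t)\). Its zeros are the reciprocals of the nonzero zeros of \(G\), and a zero of \(G\) at the origin only lowers the degree of \(G^*\). Hence \(G^*\) is real-rooted whenever \(G\) is. Note that \(F\) need not have exact degree \(m\) (we allow \(f_m=0\)), and this is harmless for the same reason.

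Fix \(1\leqslant k\leqslant m-1\) and carry out three steps.
\begin{enumerate}[label=(\roman*)]
\item Differentiate \(k-1\) times:
\[ D^{k-1}F(t)=\sum_{j\geqslant k-1} f_j\,\frac{j!}{(j-k+1)!}\,t^{j-k+1}, \]
which has degree at most \(m-k+1\).
\item Reverse with \(N=m-k+1\):
\[ G(t)=\sum_{j\geqslant k-1} f_j\,\frac{j!}{(j-k+1)!}\,t^{m-j}. \]
\item Differentiate \(m-k-1\) times. Only the terms with \(j\in\{k-1,k,k+1\}\) survive, giving the quadratic
\[ Q(t)=f_{k-1}(k-1)!\,\frac{(m-k+1)!}{2}\,t^2+f_k\,k!\,(m-k)!\,t+f_{k+1}\,\frac{(k+1)!}{2}\,(m-k-1)! . \]
\end{enumerate}

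By the two preservation facts, \(Q\) is either real-rooted, a nonzero constant, or identically zero. In every case its discriminant is nonnegative, provided we check the degenerate cases. If the leading coefficient vanishes then \(f_{k-1}=0\), so the right-hand side of the target inequality is \(0\) and the inequality is trivial.

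Otherwise \(b^2\geqslant 4ac\) reads
\[ f_k^2\,(k!)^2\,((m-k)!)^2 \geqslant f_{k-1}f_{k+1}\,(k-1)!\,(k+1)!\,(m-k+1)!\,(m-k-1)! . \]
Dividing by \((k-1)!\,k!\,(m-k-1)!\,(m-k)!\) yields \(k(m-k)f_k^2\geqslant (k+1)(m-k+1)f_{k-1}f_{k+1}\), which is the asserted inequality.

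The final claim follows at once. When the \(f_k\) are nonnegative, these are precisely the inequalities in the definition of \(\ULC(m)\), namely the equivalent form of log-concavity of \(f_k/\binom{m}{k}\).

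The only delicate points are bookkeeping ones. The first is checking that the preservation of real-rootedness survives degree drops, which come from zeros at the origin or from \(f_m=0\). The second is getting the factorial constants right in \(Q\). The argument itself is routine, so we expect no step to be a genuine obstacle.
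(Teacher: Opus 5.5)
Your proof is correct: the factorial constants in $Q$ are right, and the degenerate cases are all handled by your $f_{k-1}=0$ branch, including degree drops and $Q\equiv 0$ (which arises when a nonzero constant is differentiated). The paper gives no proof of its own and simply cites Stanley's Theorem 2, and your reduction to a quadratic via Rolle's theorem and reversal is the classical argument behind Newton's inequalities.
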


\section{Parity-conditioned laws}
\label{sec:parity}

Let \(X_1,\ldots,X_n\) be independent random variables taking values in
\(\{0,1,2\}\).
Denote
\begin{equation}
\label{eq:ternary-probabilities}
\Prob(X_i=0) = a_i , \qquad  \Prob(X_i=1) = b_i , \qquad  \Prob(X_i=2) = c_i .
\end{equation}
Let
\begin{equation}
\label{eq:sum-definition-parity}
S_n = X_1 + \cdots + X_n .
\end{equation}
Define
\begin{equation}
\label{eq:parity-variables}
J = S_n \bmod 2 , \qquad   K = \frac{S_n-J}{2} .
\end{equation}
Thus \(J\in\{0,1\}\), and \(S_n\) is determined by the pair \((J,K)\).

Consider the generating polynomial
\begin{equation}
\label{eq:generating-polynomial}
P(z) = \prod_{i=1}^n (a_i+b_i z+c_i z^2) .
\end{equation}
Write its even--odd decomposition as
\begin{equation}
\label{eq:even-odd-decomposition}
P(z) = E(z^2) + zO(z^2) ,
\end{equation}
where
\begin{equation}
\label{eq:even-polynomial}
E(t) = \sum_{k=0}^n e_k t^k ,  \qquad  O(t) = \sum_{k=0}^{n-1} o_k t^k .
\end{equation}
Then
\begin{equation}
\label{eq:odd-polynomial}
e_k = \Prob(S_n=2k) ,  \qquad  o_k = \Prob(S_n=2k+1) .
\end{equation}
Consequently, conditional on \(J=0\), \(K\) has probability mass function
proportional to \((e_0,\ldots,e_n)\), and conditional on \(J=1\), \(K\) has
probability mass function proportional to \((o_0,\ldots,o_{n-1})\).

\begin{proposition}
\label{prop:parity-ulc}
The sequence \((e_0, \ldots, e_n)\) is \(\ULC(n)\), and the sequence\linebreak
\((o_0, \ldots, o_{n-1})\) is \(\ULC(n-1)\).
\end{proposition}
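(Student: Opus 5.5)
The plan is to apply Corollary~\ref{thm:HB} to the generating polynomial \(P\), conclude that \(E\) and \(O\) are real-rooted, and then read off ultra-log-concavity from Newton's inequalities (Theorem~\ref{thm:Newton}). The only real work is checking the hypothesis of Corollary~\ref{thm:HB}, namely that every zero of \(P\) lies in the open left half-plane. This is done factor by factor, with degenerate cases handled by perturbation.

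\textbf{Zeros of the factors.} Each factor \(q_i(z)=a_i+b_iz+c_iz^2\) has nonnegative coefficients. Consider first the generic case \(a_i,b_i,c_i>0\).
\begin{itemize}[leftmargin=*]
\item If \(b_i^2\geqslant 4a_ic_i\), the roots are real. They are negative because their product \(a_i/c_i\) is positive and their sum \(-b_i/c_i\) is negative.
\item If \(b_i^2< 4a_ic_i\), the roots are complex conjugates with real part \(-b_i/(2c_i)<0\).
\end{itemize}
Hence all zeros of \(P\) lie in the open left half-plane. Corollary~\ref{thm:HB} then gives that \(E\) and \(O\) have only real nonpositive zeros. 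Their coefficients \(e_k,o_k\) are probabilities, hence nonnegative. Applying Theorem~\ref{thm:Newton} to \(E\), with degree bound \(m=n\), and to \(O\), with \(m=n-1\), yields exactly the inequalities \eqref{eq:ulc-inequality} for the respective orders.

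One point needs care here. Newton's inequalities in the form \eqref{eq:newton-inequalities} hold with \(m\) equal to the formal length of the coefficient list even when the actual degree is smaller, since trailing zero coefficients are harmless. So no separate argument is needed when \(\deg E<n\) or \(\deg O<n-1\).

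\textbf{The degenerate case.} Some of \(a_i,b_i,c_i\) may vanish, which can put zeros at \(0\) or on the imaginary axis, or lower the degree. For example, \(b_i=0\) gives purely imaginary roots, and \(a_i=0\) gives a root at \(0\). I expect this to be the main obstacle, and I would handle it by continuity:
\begin{itemize}[leftmargin=*]
\item replace \((a_i,b_i,c_i)\) by strictly positive \((a_i+\varepsilon,b_i+\varepsilon,c_i+\varepsilon)\), normalized;
\item apply the generic argument to obtain \eqref{eq:ulc-inequality} for the perturbed \(e_k,o_k\);
\item let \(\varepsilon\to0\). The coefficients \(e_k,o_k\) are polynomial, hence continuous, in the parameters, and the inequalities \eqref{eq:ulc-inequality} are non-strict, so they pass to the limit.
\end{itemize}
This avoids any need to invoke Hermite--Biehler with boundary zeros.

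A further subtlety is that Corollary~\ref{thm:HB} is derived from Theorem~\ref{thm:HBorig}, which requires non-constant \(R\) and \(I\). In the generic case with \(n\geqslant 2\), both \(E\) and \(O\) are non-constant. In the case \(n=1\), \(O\) is the constant \(b_1\), and the statement for \(O\) (order \(0\)) is vacuous anyway. The statement for \(E(t)=a_1+c_1t\) is also vacuous, since a linear polynomial yields no inequalities for \(k\in\{1,\ldots,m-1\}\) when \(m=1\). So small cases pose no problem.
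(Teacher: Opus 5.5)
Your proposal is correct and follows the paper's proof essentially step for step. It uses the same sign argument to place the zeros of each positive quadratic in the open left half-plane, then Corollary~\ref{thm:HB} followed by Theorem~\ref{thm:Newton}, and the same \(\varepsilon\)-regularization with passage to the limit in the non-strict inequalities. Your extra remarks on degree and small \(n\) are harmless but not needed. In the positive case \(E\) and \(O\) have positive leading coefficients, so their degrees are exactly \(n\) and \(n-1\). Also, the non-constancy hypothesis of Theorem~\ref{thm:HBorig} concerns \(E(-x^2)\) and \(xO(-x^2)\) rather than \(E\) and \(O\), and these are non-constant for every \(n\geqslant 1\).
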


\begin{proof}
We first assume that \(a_i, b_i, c_i > 0\) for all \(i\).
For each \(i\), the quadratic polynomial
\begin{equation}
\label{eq:even-odd-coefficients}
a_i+b_i z+c_i z^2
\end{equation}
has all zeros in the open left half-plane.
Indeed, if its zeros are real, then their product is \(a_i/c_i>0\) and their
sum is \(-b_i/c_i<0\), so both zeros are negative.
If its zeros are not real, then they form a conjugate pair with real part
\(-b_i/(2c_i) < 0\).
Therefore \(P\) has all zeros in the open left half-plane.
By Corollary \ref{thm:HB}, applied to
\begin{equation}
\label{eq:quadratic-factor}
P(z) = E(z^2) + z O(z^2) ,
\end{equation}
both \(E\) and \(O\) have only real (and nonpositive) zeros.
By Theorem \ref{thm:Newton}, the sequences of coefficients of these polynomials
are ULC, as claimed.

It remains to remove the positivity assumption.
If some coefficients vanish, replace each triple \((a_i, b_i, c_i)\) by
\begin{equation}
\label{eq:epsilon-regularization}
\left(\frac{a_i+\varepsilon}{1+3\varepsilon},
         \frac{b_i+\varepsilon}{1+3\varepsilon},
         \frac{c_i+\varepsilon}{1+3\varepsilon}\right) ,
   \qquad \varepsilon > 0 .
\end{equation}
The corresponding even and odd coefficient sequences are \(\ULC(n)\) and\linebreak
\(\ULC(n-1)\), respectively.
Letting \(\varepsilon\downarrow0\), and using the closedness of the
ultra-log-concavity inequalities, gives the claim.
\end{proof}

\begin{corollary}
\label{thm:corbin}
For every choice of independent ternary random variables \(X_1, \ldots, X_n\),
\begin{equation}
\label{eq:cor-even-entropy}
H(K\mid J=0) \leqslant H(B_n) ,
\end{equation}
and
\begin{equation}
\label{eq:cor-odd-entropy}
H(K\mid J=1)  \leqslant  H(B_{n-1}) .
\end{equation}
Consequently,
\begin{equation}
\label{eq:cor-conditional-entropy}
H(K\mid J)  \leqslant  \Prob(J=0)H(B_n) + \Prob(J=1)H(B_{n-1}) .
\end{equation}
\end{corollary}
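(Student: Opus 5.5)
The corollary follows by combining Proposition~\ref{prop:parity-ulc} with Yu's theorem (Theorem~\ref{thm:Yu}), after normalizing the parity-conditioned sequences.

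First, assume \(\Prob(J=0)=\sum_{k=0}^n e_k>0\). Conditional on \(J=0\), the variable \(K\) takes values in \(\{0,\ldots,n\}\) and has probability mass function \(e_k/\Prob(J=0)\). Ultra-log-concavity is preserved under multiplication by a positive constant, because inequality \eqref{eq:ulc-inequality} is homogeneous of degree two in the sequence. So by Proposition~\ref{prop:parity-ulc} this conditional law is \(\ULC(n)\), and Theorem~\ref{thm:Yu} with \(m=n\) gives \(H(K\mid J=0)\leqslant H(B_n)\).

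The same argument applies when \(\Prob(J=1)>0\). The conditional law of \(K\) given \(J=1\) is supported on \(\{0,\ldots,n-1\}\) and is proportional to \((o_0,\ldots,o_{n-1})\), which is \(\ULC(n-1)\). Theorem~\ref{thm:Yu} with \(m=n-1\) then yields \(H(K\mid J=1)\leqslant H(B_{n-1})\). For \(n=1\) the sequence has a single entry, so the ULC condition is vacuous and both sides equal zero.

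If either parity has probability zero, the corresponding conditional entropy is undefined. It also does not matter, since it enters \(H(K\mid J)\) with weight zero. The convention is to read \eqref{eq:cor-even-entropy} or \eqref{eq:cor-odd-entropy} only when the conditioning event has positive probability.

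Finally, write
\[
H(K\mid J)=\Prob(J=0)H(K\mid J=0)+\Prob(J=1)H(K\mid J=1)
\]
and apply the two bounds term by term to obtain \eqref{eq:cor-conditional-entropy}.

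The only point needing care is that the ULC orders match the support lengths exactly: \(n\) for the even part and \(n-1\) for the odd part. This holds because \(E\) has degree at most \(n\) and \(O\) has degree at most \(n-1\).

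Yu's theorem needs nothing beyond the ULC property of the pmf on \(\{0,\ldots,m\}\). It allows zero entries at the ends, since ULC in the sense of \eqref{eq:ulc-inequality} permits vanishing boundary terms, so no extra regularity check is required.
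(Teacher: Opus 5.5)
Your proposal is correct and follows the paper's proof: you normalize the even and odd coefficient sequences (which preserves ultra-log-concavity), invoke Proposition~\ref{prop:parity-ulc} and Theorem~\ref{thm:Yu} with $m=n$ and $m=n-1$, set aside a parity class of probability zero, and average over $J$. Your added remarks, on the degree-two homogeneity of the ULC inequalities and on the degenerate case $n=1$, are accurate but go beyond what the paper's argument requires.
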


\begin{proof}
If \(\Prob(J=0) > 0\), the conditional law of \(K\) given \(J=0\) is obtained by
normalizing the sequence \((e_0, \ldots, e_n)\).
Normalization preserves ultra-log-concavity, so this conditional law is \(\ULC(n)\).
By Theorem \ref{thm:Yu} we then have
\begin{equation}
\label{eq:sum-definition-main}
H(K\mid J=0)  \leqslant  H(B_n) .
\end{equation}
If \(\Prob(J=0)=0\), the term \(\Prob(J=0)H(K\mid J=0)\) is irrelevant.

Similarly, if \(\Prob(J=1) > 0\), the conditional law of \(K\) given \(J=1\) is
obtained by normalizing \((o_0, \ldots, o_{n-1})\) and is therefore \(\ULC(n-1)\).
By Theorem \ref{thm:Yu} we then have
\begin{equation}
\label{eq:main-upper-bound}
H(K\mid J=1)  \leqslant  H(B_{n-1}).
\end{equation}
Averaging over \(J\) yields \eqref{eq:cor-conditional-entropy}.
\end{proof}

\section{The maximum entropy theorem}
\label{sec:main}

\begin{theorem}
\label{thm:main}
Let \(X_1, \ldots, X_n\) be independent random variables taking values in
\(\{0,1,2\}\), and let
\begin{equation}
\label{eq:main-weight}
S_n = X_1 + \cdots + X_n .
\end{equation}
Then
\begin{equation}
\label{eq:maximizing-distribution}
H(S_n)
   \leqslant  w_0 H(B_n) + (1-w_0) H(B_{n-1}) + h(w_0) ,
\end{equation}
where
\begin{equation}
\label{eq:main-parity-variables}
w_0 = \frac{1}{1 + 2^{-H(B_n)+H(B_{n-1})}} .
\end{equation}
The equality is attained by taking \(X_1, \ldots, X_{n-1}\) uniform on \(\{0, 2\}\),
and taking \(X_n\) with distribution
\begin{equation}
\label{eq:entropy-chain-rule}
\Prob(X_n=0) = \Prob(X_n=2) = \frac{w_0}{2} ,  \qquad  \Prob(X_n=1) = 1 - w_0 .
\end{equation}
\end{theorem}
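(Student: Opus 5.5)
The plan is to use the bijection between \(S_n\) and the pair \((J,K)\) of \eqref{eq:parity-variables}, apply the chain rule, and then optimize over a single parameter. Since \(S_n = 2K + J\), we have
\begin{equation*}
H(S_n) = H(J,K) = H(J) + H(K\mid J) .
\end{equation*}
Set \(w = \Prob(J=0)\), so that \(H(J) = h(w)\). Corollary~\ref{thm:corbin} then gives
\begin{equation*}
H(S_n) \leqslant \phi(w) := w H(B_n) + (1-w) H(B_{n-1}) + h(w) ,
\end{equation*}
and it remains to bound \(\phi\) uniformly over \(w\in[0,1]\).

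\(\phi\) is a strictly concave function of \(w\): it is linear plus the concave \(h\). Its derivative is
\begin{equation*}
\phi'(w) = H(B_n) - H(B_{n-1}) + \log_2\frac{1-w}{w} .
\end{equation*}
This vanishes exactly when \((1-w)/w = 2^{-H(B_n)+H(B_{n-1})}\), that is, at \(w = w_0\) from \eqref{eq:main-parity-variables}. Since \(\phi'(w)\to+\infty\) as \(w\downarrow 0\) and \(\phi'(w)\to-\infty\) as \(w\uparrow 1\), \(w_0\) is the global maximizer on \([0,1]\). Hence \(H(S_n)\leqslant\phi(w_0)\), which is \eqref{eq:maximizing-distribution}. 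Equivalently, one can use the log-sum form
\begin{equation*}
\phi(w_0) = \log_2\bigl(2^{H(B_n)} + 2^{H(B_{n-1})}\bigr)
\end{equation*}
together with the Gibbs inequality, which gives the same bound directly.

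For the equality case, take \(X_1,\ldots,X_{n-1}\) uniform on \(\{0,2\}\) and \(X_n\) as in \eqref{eq:entropy-chain-rule}. Then \(T = (X_1+\cdots+X_{n-1})/2 \sim B_{n-1}\).
\begin{itemize}
\item The parity \(J\) equals \(1\) exactly when \(X_n = 1\), so \(\Prob(J=0) = w_0\).
\item Given \(J=1\), we have \(K = T\), so \(H(K\mid J=1) = H(B_{n-1})\).
\item Given \(J=0\), we have \(K = T + X_n/2\), where \(X_n/2\) is \(\mathrm{Bernoulli}(1/2)\) independent of \(T\). Hence \(K\sim B_n\) and \(H(K\mid J=0) = H(B_n)\).
\end{itemize}
Therefore \(H(S_n) = h(w_0) + w_0 H(B_n) + (1-w_0)H(B_{n-1})\), which is the bound.

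There is no serious obstacle, since the hard work (ultra-log-concavity via Hermite--Biehler and Newton, then Yu's theorem) is already contained in Corollary~\ref{thm:corbin}. The only points needing care are the boundary cases \(w\in\{0,1\}\), where a conditional entropy term is multiplied by zero; the convention \(0\cdot H = 0\) and the continuity of \(\phi\) handle these.
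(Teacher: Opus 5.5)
Your proposal is correct and follows the paper's proof essentially step for step: you use the chain rule $H(S_n)=H(J)+H(K\mid J)$, bound the conditional entropy by Corollary~\ref{thm:corbin}, maximize the strictly concave function of $w$ by differentiation, and verify equality with the same construction. The closed form $\log_2\bigl(2^{H(B_n)}+2^{H(B_{n-1})}\bigr)$ for the maximum is a small addition the paper does not state.
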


\begin{proof}
Let
\begin{equation}
\label{eq:main-parity-probability}
J = S_n \bmod 2,  \qquad  K = \frac{S_n-J}{2} .
\end{equation}
Since \(S_n = 2K + J\), the pair \((K, J)\) determines \(S_n\), and hence
\begin{equation}
\label{eq:binary-entropy-main}
H(S_n) = H(K,J) = H(J) + H(K\mid J) .
\end{equation}
Let
\begin{equation}
\label{eq:main-conditional-bound}
w = \Prob(J=0) .
\end{equation}
Then
\begin{equation}
\label{eq:main-bound-w}
H(J) = h(w) .
\end{equation}
By Corollary \ref{thm:corbin} we have
\begin{equation}
\label{eq:attaining-sum-even}
H(K\mid J)  \leqslant  w H(B_n) + (1-w) H(B_{n-1}) ,
\end{equation}
and therefore
\begin{equation}
\label{eq:attaining-even-law}
H(S_n)  \leqslant  h(w) + w H(B_n) + (1-w) H(B_{n-1}) .
\end{equation}
It remains to maximize the right-hand side over \(w \in [0,1]\).
Since this expression is strictly concave in \(w\), it is easily found by
differentiation that the maximizer is
\begin{equation}
\label{eq:attaining-sum-odd}
w_0 = \frac{1}{1 + 2^{-H(B_n)+H(B_{n-1})}} .
\end{equation}

It is left to check that the upper bound in \eqref{eq:maximizing-distribution} can be achieved.
This was already done in \citep[Theorem 2.1]{kovacevic}, but we repeat the
construction for completeness.
Let \(X_1, \ldots, X_{n-1}\) be independent and uniform on \(\{0,2\}\), and
let \(X_n\) be independent of them with
\begin{equation}
\label{eq:attaining-odd-law}
\Prob(X_n=0) = \Prob(X_n=2) = \frac{w_0}{2} ,  \qquad  \Prob(X_n=1) = 1 - w_0 .
\end{equation}
If \(X_n \in \{0,2\}\), then
\begin{equation}
\label{eq:extension-generating-polynomial}
S_n = 2Y ,
\end{equation}
where \(Y \sim B_n\).
This event has probability \(w_0\), and corresponds to \(J = 0\).
Thus
\begin{equation}
\label{eq:extension-even-odd-decomposition}
K \mid J = 0 \sim B_n .
\end{equation}
If \(X_n = 1\), then
\begin{equation}
\label{eq:example-base-polynomial}
S_n = 2Y + 1 ,
\end{equation}
where \(Y \sim B_{n-1}\).
This event has probability \(1 - w_0\), and corresponds to \(J = 1\).
Thus
\begin{equation}
\label{eq:example-cube}
K \mid J = 1 \sim B_{n-1} .
\end{equation}
Therefore equality holds in the conditional entropy bound, and also \(H(J)=h(w_0)\).
Hence the right-hand side of \eqref{eq:maximizing-distribution} is attained.
\end{proof}

Figure \ref{fig:distribution} illustrates the conditional distributions \(K \mid J\)
in the entropy-maximizing case.

\begin{figure}
\centering
  \includegraphics[width=0.7\columnwidth]{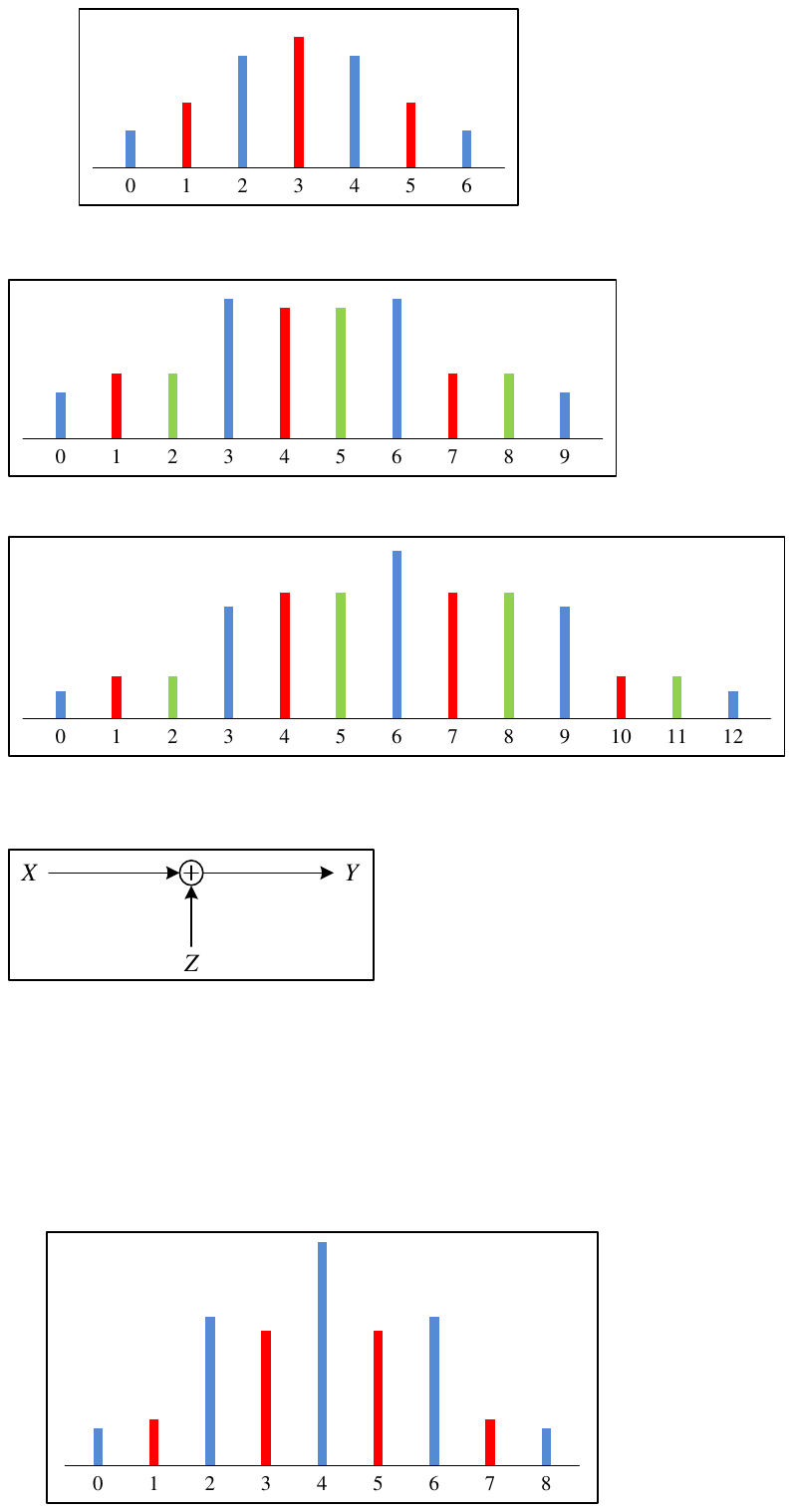}
\caption{The probability distribution of \(S_4\) in the case when the
distributions of \(X_i\) are chosen according to Theorem \ref{thm:main}.
For the purpose of illustration, the probability masses assigned to even,
resp.\ odd, values (corresponding to the case \(J=0\), resp.\ \(J=1\)) are
marked in blue, resp.\ red.
Note that \(K \mid J = 0 \sim B_4\) and \(K \mid J = 1 \sim B_3\).}
\label{fig:distribution}
\end{figure}%

\section{On extensions to larger alphabets}
\label{sec:example}

To conclude the paper, we note that, unfortunately, the proof method used to
establish Theorem \ref{thm:main} does not carry over to larger alphabets.

Recall that the main point in the proof was to control the two conditional
laws of \(K\), according to whether \(S_n\) is even or odd.
Namely, the probability generating function of \(S_n\),
\begin{equation}
\label{eq:example-residue-decomposition}
P(z) = \prod_{i=1}^n (a_i + b_i z + c_i z^2) ,
\end{equation}
is first decomposed into even and odd parts as
\begin{equation}
\label{eq:display-45}
P(z) = E(z^2) + z O(z^2) .
\end{equation}
The coefficients of \(E\) and \(O\) are precisely the (unnormalized) laws of
\(K\) conditioned on the two parity classes.
Since each quadratic \(a_i + b_i z + c_i z^2\) has all zeros in the left
half-plane (because \(a_i,b_i,c_i \geqslant 0\)), so does the product \(P\).
The Hermite--Biehler theorem then implies that \(E\) and \(O\) have only real
zeros, and Newton's inequalities imply that their coefficient sequences are
ultra-log-concave of the appropriate orders.
Finally, Yu's maximum-entropy theorem for ultra-log-concave distributions
bounds the two conditional entropies by \(H(B_n)\) and \(H(B_{n-1})\),
respectively.

The above argument illustrates both why the ternary case is tractable and why
the general conjecture remains difficult.
For alphabets of size at least four, the analogous one-variable probability
generating functions have degree \(r \geqslant 3\), and positivity of their
coefficients no longer implies that their roots are all in the left half-plane.
Indeed, as the example below shows, the residue-class polynomials need not
be real-rooted, and the sequences of their coefficients need not be ultra-log-concave.
This is true already for \(r=3\) and \(n=3\).
Thus the method used in the present paper does not simply iterate to larger
alphabets; proving the full conjecture, if it is indeed true, will likely
require a different argument.

\begin{example}
\label{ex:larger-alphabet-obstruction}
\textnormal{
Consider the case \(r=3\), \(n=3\).
Let
\begin{equation}
\label{eq:display-46}
p(z) = 0.15 + 0.06z + 0.70z^2 + 0.09z^3
\end{equation}
and
\begin{equation}
\label{eq:display-47}
P(z) = p(z)^3 .
\end{equation}
Write
\begin{equation}
\label{eq:display-48}
P(z) = P_0(z^3) + z P_1(z^3) + z^2 P_2(z^3) .
\end{equation}
A direct calculation gives
\begin{equation}
\label{eq:example-residue-polynomials}
\begin{aligned}
   P_0(t) &= 0.003375 + 0.044091 t + 0.369325 t^2 + 0.000729 t^3 ,  \\
   P_1(t) &= 0.00405 + 0.23292 t + 0.133758 t^2 ,  \\
   P_2(t) &= 0.04887 + 0.145872 t + 0.01701 t^2 .
\end{aligned}
\end{equation}
The polynomial \(P_0\) is not real-rooted -- it has one real zero and one
complex conjugate pair.
Moreover, the coefficients of \(P_0\) are not ultra-log-concave of order \(3\)
because the inequality \(u_1^2 \geqslant 3 u_0 u_2\) does not hold (here \(u_k\)
are the coefficients of \(P_0\)).
This means that the conditional distribution corresponding to the residue class
\(S_n \equiv 0 \pmod r\) is not \(\ULC(n)\), and therefore Theorem \ref{thm:Yu}
cannot be used to upper-bound its entropy.}
\myqed
\end{example}

\section*{Funding}\label{sec:funding}
This work was supported by the Ministry of Science, Technological Development
and Innovation of the Republic of Serbia (contract no. 451-03-34/2026-03/200156)
and by the Faculty of Technical Sciences, University of Novi Sad, Serbia (project
no. 01-3609/1).


\section*{Declaration of generative AI and AI-assisted technologies in the manuscript preparation process}\label{sec:generative-ai}
During the preparation of this work, the author used ChatGPT (OpenAI) to verify some of the derivations and assist with
formatting the manuscript. The author
reviewed and edited the content as needed and takes full responsibility for the
content of the published article.


\begin{thebibliography}{99}

\bibitem[Cover and Thomas(2006)]{cover+thomas}
   T. M. Cover and J. A. Thomas,
   \emph{Elements of Information Theory},
   2nd ed., John Wiley and Sons, Inc., 2006.
\bibitem[Harremo\"es(2001)]{harremoes}
  P. Harremo\"es,
	``Binomial and {P}oisson distributions as maximum entropy distributions,''
	\emph{IEEE Trans. Inform. Theory} {47}(5) (2001), 2039--2041.
	https://doi.org/10.1109/18.930936
\bibitem[Hillion and Johnson(2016)]{hillion+johnson}
  E. Hillion and O. T. Johnson,
	``Discrete versions of the transport equation and the {S}hepp--{O}lkin conjecture,''
  \emph{Ann. Probab.} {44}(1) (2016), 276--306.
	https://doi.org/10.1214/14-AOP973
\bibitem[Hillion and Johnson(2017)]{hillion+johnson2}
  E. Hillion and O. T. Johnson,
	``A proof of the {S}hepp--{O}lkin entropy concavity conjecture,''
  \emph{Bernoulli} {23}(4B) (2017), 3638--3649.
	https://doi.org/10.3150/16-BEJ860
\bibitem[Hillion and Johnson(2019)]{hillion+johnson3}
  E. Hillion and O. T. Johnson,
	``A proof of the {S}hepp--{O}lkin entropy monotonicity conjecture,''
  \emph{Electron. J. Probab.} {24} (2019), article no.~126.
	https://doi.org/10.1214/19-EJP380
\bibitem[Jaynes(1957)]{jaynes}
  E. T. Jaynes,
	``Information theory and statistical mechanics,''
	\emph{Phys. Rev.} 106(4) (1957), 620--630. 
	https://doi.org/10.1103/PhysRev.106.620
\bibitem[Johnson(2007)]{johnson}
  O. T. Johnson,
  ``Log-concavity and the maximum entropy property of the {P}oisson distribution,''
  \emph{Stochastic Process. Appl.} {117}(6) (2007), 791--802.
	https://doi.org/10.1016/j.spa.2006.10.006
\bibitem[Johnson et al.(2013)]{johnson3}
  O. Johnson, I. Kontoyiannis, and M. Madiman,
  ``Log-concavity, ultra-log-concavity, and a maximum entropy property of discrete compound Poisson measures,''
	\emph{Discrete Applied Math.} {161}(9) (2013), 1232--1250.
	https://doi.org/10.1016/j.dam.2011.08.025
\bibitem[Kova\v cevi\'c(2021)]{kovacevic}
  M. Kova\v cevi\'c,
  ``On the maximum entropy of a sum of independent discrete random variables,''
  \emph{Theory Probab. Appl.} {66}(3) (2021), 482--487.
  https://doi.org/10.1137/S0040585X97T99054X
\bibitem[Mateev(1978)]{mateev}
  P. Mateev,
	``The entropy of the multinomial distribution,''
	\emph{Theory Probab. Appl.} {23}(1) (1978), 188--190.
	https://doi.org/10.1137/1123020
\bibitem[Ordentlich(2006)]{ordentlich}
  E. Ordentlich,
	``Maximizing the entropy of a sum of independent bounded random variables,''
  \emph{IEEE Trans. Inform. Theory} {52}(5) (2006), 2176--2181.
	https://doi.org/10.1109/TIT.2006.872858
\bibitem[Rahman and Schmeisser(2002)]{rahman+schmeisser}
  Q. I. Rahman and G. Schmeisser,
  \emph{Analytic Theory of Polynomials},
  Oxford University Press, 2002.
\bibitem[Shepp and Olkin(1981)]{shepp+olkin}
  L. A. Shepp and I. Olkin,
	``Entropy of the sum of independent {B}ernoulli random variables and of the multinomial distribution,''
	Tech. Report 131, Stanford University, 1978.
	Reprinted in \emph{Contributions to probability}, Academic Press, New York, 1981, pp.~201--206.
	https://doi.org/10.1016/B978-0-12-274460-0.50022-9
\bibitem[Stanley(1989)]{stanley}
  R. P. Stanley,
	``Log-concave and unimodal sequences in algebra, combinatorics, and geometry,''
  \emph{Ann. New York Acad. Sci.} {576} (1989), 500--535.
	https://doi.org/10.1111/j.1749-6632.1989.tb16434.x
\bibitem[Yu(2008a)]{yu2}
  Y. Yu,
	``Maximum entropy for sums of symmetric and bounded random variables: a short derivation,''
	\emph{IEEE Trans. Inform. Theory} {54}(4) (2008), 1818--1819.
	https://doi.org/10.1109/TIT.2008.917660
\bibitem[Yu(2008b)]{yu}
  Y. Yu,
	``On the maximum entropy properties of the binomial distribution,''
	\emph{IEEE Trans. Inform. Theory} {54}(7) (2008), 3351--3353.
	https://doi.org/10.1109/TIT.2008.924715

\end{thebibliography}
\end{document}